\newcommand{\ds}{\displaystyle}
\newcommand{\ts}{\textstyle}
\newcommand{\mr}{\mathrm}
\newcommand{\mc}{\mathcal}
\newcommand{\mb}{\mathbf}
\newcommand{\uhr}{\upharpoonright}
\newcommand{\tsfrac}[2]{{\ts\frac{#1}{#2}}}
\let\Ex\undefined
\let\Pr\undefined
\DeclareMathOperator*{\Ex}{\mathds{E}}
\DeclareMathOperator*{\Pr}{\mathds{P}}
\newcommand{\Supp}{\mathrm{Supp}}
\newcommand{\defiff}{\stackrel{\text{\tiny{def}}}{\iff}}
\newcommand{\fieldfont}[1]{\mathbb{#1}}
\newcommand{\N}{\fieldfont{N}}
\newtheoremstyle{theorem-style}
  {}
  {}
  {\slshape}
  {}
  {\bf}
  {.}
  {.5em}
  {}
\newtheorem{thm}{Theorem}
\newtheorem{la}[thm]{Lemma}
\newtheorem{main-la}[thm]{Main Lemma}
\newtheorem{cor}[thm]{Corollary}
\newtheorem{fact}[thm]{Fact}
\theoremstyle{definition}
\newtheorem{df}[thm]{Definition}
\renewcommand{\|}{|}
\newcommand{\PARITY}{\textsc{parity}}
\newcommand{\NCone}{\mathsf{NC^1}}
\newcommand{\ACone}{\mathsf{AC^1}}
\newcommand{\Ppoly}{\mathsf{P/poly}}
\newcommand{\as}{\mathsf{as}}
\begin{document}

\title{The Average Sensitivity of Bounded-Depth Formulas}
\author{Benjamin Rossman\footnote{National Institute of Informatics (Tokyo, Japan) and Simons Institute (Berkeley, CA). \texttt{rossman@nii.ac.jp}}}
\date{\today}
\maketitle{}

\begin{abstract}
We show that unbounded fan-in boolean formulas of depth $d+1$ and size $s$ have average sensitivity $O(\frac{1}{d}\log s)^d$.  In particular, this gives a tight $2^{\Omega(d(n^{1/d}-1))}$ lower bound on the size of depth $d+1$ formulas computing the \textsc{parity} function.  These results strengthen the corresponding $2^{\Omega(n^{1/d})}$ and $O(\log s)^d$ bounds for circuits due to H{\aa}stad (1986) and Boppana (1997).  Our proof technique studies a random process where the Switching Lemma is applied to formulas in an efficient manner.
\end{abstract}

\section{Introduction}

We consider boolean circuits with unbounded fan-in AND and OR gates and negations on inputs.  {\em Formulas} are the class of tree-like circuits in which all gates have fan-out 1.  {\em Size} of circuits (including formulas) is measured by the total number of gates.  {\em Depth} is the maximum number of gates on an input-to-output path.

Lower bounds against bounded-depth circuits were first proved in the 1980s \cite{Ajtai83,Furst84,Yao85,hastad1986almost}, culminating in a tight size-depth tradeoff for circuits computing the $\PARITY$ function. The technique, based on random restrictions, applies more generally to boolean functions with high average sensitivity.

\begin{thm}[H{\aa}stad \cite{hastad1986almost}]\label{thm:hastad}
Depth $d+1$ circuits computing $\PARITY$ have size $2^{\Omega(n^{1/d})}$.
\end{thm}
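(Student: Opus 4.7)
The standard route is through H{\aa}stad's Switching Lemma applied iteratively via random restrictions. Let $F$ be a depth-$(d{+}1)$ circuit of size $s$ computing $\PARITY$ on $n$ inputs. Without loss of generality the layers alternate between AND and OR gates, and a preliminary argument (or an initial restriction) lets us assume the bottom layer has fan-in at most $t$ for some parameter $t$ to be chosen of order $\log s$.

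My plan is to apply a sequence of $d-1$ independent $p$-random restrictions $\rho_1,\dots,\rho_{d-1}$, each with $*$-probability $p \asymp 1/t$, and use the Switching Lemma to reduce the depth by $1$ after each restriction. Concretely, every depth-$2$ subcircuit at the bottom is (say) a $t$-CNF, and the Switching Lemma states that with probability at least $1-(Cpt)^t$ its restriction is computable by a $t$-DNF; rewriting in this form flips the gate type at the new bottom so that it matches the gate above, allowing those two levels to collapse. A union bound over the $\le s$ gates and over the $d-1$ rounds asks that $d \cdot s \cdot (Cpt)^t = o(1)$, which is easily arranged with $t = \Theta(\log s)$ and $p = c/t$ for a small constant $c$.

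After all $d-1$ rounds, what remains (with positive probability) is a depth-$2$ circuit of bottom fan-in $\le t$ computing $\PARITY$, or its negation, on the variables that $\rho_1\cdots\rho_{d-1}$ left alive. The expected number of live variables is $n p^{d-1}$, and by a standard concentration argument a set of size $\Omega(n p^{d-1})$ survives with high probability. Now the classical fact that any CNF or DNF computing $m$-variable $\PARITY$ must have bottom fan-in at least $m$ forces $t \ge \Omega(n p^{d-1})$. Substituting $p \asymp 1/t$ gives $t^d \gtrsim n$, hence $\log s \gtrsim n^{1/d}$, which is exactly the claimed bound.

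The main obstacle is the simultaneous calibration of $p$, $t$, and the switching failure probability so that (i) enough live variables survive all $d-1$ restrictions, (ii) the union bound over gates and rounds remains $o(1)$, and (iii) the bottom fan-in does not grow under iteration. The quantitative shape $(Cpt)^t$ of the Switching Lemma is precisely what makes these three constraints compatible, and the exponent $1/d$ arises inevitably from the product $p^{d-1}$ of the $d-1$ restriction probabilities. A minor but important bookkeeping point is that \PARITY\ restricted by any $0/1$-assignment is again \PARITY\ (up to a global negation) on the surviving variables, so the lower bound on depth-$2$ circuits for \PARITY\ can indeed be invoked at the end.
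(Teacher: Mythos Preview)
The paper does not prove Theorem~\ref{thm:hastad} itself; it is quoted as H{\aa}stad's result from~\cite{hastad1986almost}, and what you have sketched is precisely that classical argument. Your outline is correct: iterate the Switching Lemma with $p \asymp 1/t$ and $t = \Theta(\log s)$, union-bound over the at most $s$ gates in each of the $O(d)$ rounds to maintain bounded bottom fan-in after every collapse, and finish with the elementary fact that any width-$t$ CNF or DNF computing $m$-variable \PARITY\ must have $t \ge m$. (One small point: there is no ``preliminary argument'' that bounds the bottom fan-in without a restriction when the circuit computes \PARITY\ exactly; you really do need the initial restriction you mention as the alternative, but this only costs a constant factor in the count of surviving variables and does not affect the exponent.)

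It is worth contrasting this with the paper's own contribution. The paper proves the stronger \emph{formula} bound (Theorem~\ref{thm:main1}), which recovers Theorem~\ref{thm:hastad} via Fact~\ref{fact:sep}, by a genuinely different mechanism. Instead of hitting all depth-$k$ subcircuits with a common restriction from a fixed schedule $p_1 > \cdots > p_d$ as you do, the paper assigns to each subformula $\Psi$ an \emph{adaptive} stopping time $\mb q(\Psi)$, determined by how the process $\{R^{\sigma,\tau}_p\}_p$ has unfolded on the children of $\Psi$, and then proves a tail bound on $\mb q(\Phi)$ (Theorem~\ref{thm:main}). This adaptivity is exactly what extracts the extra factor of $d$ in the exponent for formulas; for circuits, your fixed-schedule argument is already tight.
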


\begin{thm}[Boppana \cite{Boppana97}]\label{thm:boppana}
Depth $d+1$ circuits of size $s$ have average sensitivity $O(\log s)^d$.
\end{thm}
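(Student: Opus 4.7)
The plan is to iterate H{\aa}stad's Switching Lemma along a sequence of random restrictions and exploit the identity
\[
\Ex_\rho[\as(f|_\rho)] \;=\; p \cdot \as(f),
\]
valid for any $p$-random restriction $\rho$ (effective parameters multiply under independent compositions).  If a composed restriction of effective parameter $p^* = \Theta((\log s)^{-(d-1)})$ drives $\as(f|_\rho)$ down to $O(\log s)$ with high probability, then dividing through recovers $\as(f) = O(\log s)^d$.

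First I would apply a preliminary restriction $\rho_0$ of constant parameter (say $p_0 = 1/2$): union-bounding $(3/4)^w \leq 1/s^2$ over bottom gates of fan-in $w$, I may assume bottom fan-in $\leq t := C \log s$ for a suitable constant $C$.  Then I would apply $d-1$ independent $p$-random restrictions with $p = 1/(10t)$, invoking the Switching Lemma after each one to convert every width-$t$ depth-$2$ bottom subcircuit into a decision tree of depth $\leq t$ -- equivalently, a width-$t$ DNF or CNF, chosen of opposite type to the layer above and merged into it.  Each step reduces depth by $1$ while preserving the bottom width; the failure probability is $(5pt)^t = 2^{-t}$ per gate and $O(ds \cdot 2^{-t})$ after union-bounding over all steps.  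So, except on this small bad event, $f|_\rho$ is realised by a depth-$2$ width-$t$ circuit.

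For the remaining depth-$2$ width-$t$ circuit I would invoke the elementary fact $\as(\text{width-}t \text{ DNF or CNF}) \leq 2t$: for any $x$ with $f(x) = 1$ and any clause $C$ satisfied by $x$, any sensitive coordinate at $x$ must flip $C$ and so must lie in $C$, giving at most $t$ sensitive coordinates per $x$.  Thus $\as(f|_\rho) \leq 2t$ on the good event, while the trivial bound $n$ on the bad event contributes only $O(1)$ to $\Ex_\rho[\as(f|_\rho)]$ once $C$ is chosen so that $n \cdot d s \cdot 2^{-t} = O(1)$.  Dividing by $p^* = (1/2) \cdot (10t)^{-(d-1)}$ yields $\as(f) \leq O(\log s) \cdot 2(10t)^{d-1} = O(\log s)^d$.

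The main obstacle is calibrating $t$ and the restriction parameters so that the cumulative Switching Lemma failure probability, multiplied by the trivial bound $\as \leq n$ and divided by the exponentially small $p^*$, remains sub-constant; this amounts to choosing $C$ large enough that $2^{-t}$ beats the product $ns/p^*$.  A subtler point is that the iterated switch must actually preserve the bottom width after merging, which works because the decision tree output of the Switching Lemma has both DNF and CNF representations of width $\leq t$, so the correct one can always be chosen to merge with the layer above.
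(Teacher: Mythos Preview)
The paper does not prove Theorem~\ref{thm:boppana} itself; it is quoted as Boppana's result with a citation, and the paper's contribution is the sharper Theorem~\ref{thm:main2} for formulas, proved by the stopping-time process of \S\ref{sec:main}--\S\ref{sec:tail}. Your proposal is essentially Boppana's original argument (iterate the Switching Lemma with a fixed parameter $p=\Theta(1/\log s)$, use the identity~(\ref{eq:fact2}), and bound $\as$ of the final width-$t$ depth-$2$ circuit), so it is the ``right'' proof of the cited statement rather than anything the paper does differently.

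One genuine gap: your calibration of $t=C\log s$ requires $n\cdot ds\cdot 2^{-t}/p^\ast$ to be bounded, and you propose absorbing $n$ by enlarging the constant $C$. But $C$ is supposed to be universal, while the theorem as stated places no bound on $n$ in terms of $s$: a size-$s$ unbounded-fan-in circuit can have bottom gates of arbitrary width and hence read arbitrarily many variables. So as written, the bad-event contribution $n\cdot \Pr[\text{bad}]$ need not be $O(1)$. The fix is exactly the manoeuvre the paper uses at the start of its proof of Theorem~\ref{thm:main2}: first argue that any bottom $\mr{AND}$ (resp.\ $\mr{OR}$) of fan-in $>s$ can be replaced by $0$ (resp.\ $1$) while changing $\as$ only by a constant factor, so that without loss of generality the circuit depends on at most $s^2$ variables; after that your choice of a universal $C$ goes through. (Your preliminary $p_0=1/2$ restriction does not by itself resolve this, since on its own bad event you are still forced back to the trivial bound $n$.) With this preprocessing step inserted, the argument is correct.
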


In this paper, we prove stronger versions of these results for bounded-depth formulas:

\begin{thm}\label{thm:main1}
Depth $d+1$ formulas computing $\PARITY$ have size $2^{\Omega(d(n^{1/d}-1))}$.
\end{thm}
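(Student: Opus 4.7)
The plan is to derive Theorem~\ref{thm:main1} as an immediate corollary of the main average-sensitivity bound advertised in the abstract, namely that every depth-$(d+1)$ formula of size $s$ has average sensitivity at most $\bigl(C \cdot \tfrac{1}{d}\log s\bigr)^{d}$ for some absolute constant $C$. Once that bound is available, the deduction becomes a one-line computation, entirely parallel to the way Theorem~\ref{thm:hastad} for circuits follows from Theorem~\ref{thm:boppana}.

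First I would recall that the $\PARITY$ function on $n$ variables has average sensitivity exactly $n$, since flipping any single coordinate flips the output. Hence, if a depth-$(d+1)$ formula $F$ of size $s$ computes $\PARITY$, the formula bound forces
\[
n \;=\; \as(\PARITY) \;=\; \as(F) \;\leq\; \left(\frac{C \log s}{d}\right)^{\!d}.
\]
Taking $d$-th roots and rearranging yields $\log s \geq \tfrac{d}{C}\, n^{1/d}$, and hence $s \geq 2^{\Omega(d \cdot n^{1/d})}$.

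To recover the precise stated form with $d(n^{1/d}-1)$ in the exponent, I would observe that $d \cdot n^{1/d}$ and $d(n^{1/d}-1)$ differ only by an additive $d$. Whenever $n \geq 2^d$ this additive $d$ is at most a constant fraction of $d \cdot n^{1/d}$ and is absorbed into the $\Omega(\cdot)$; for $n < 2^d$ the bound $2^{\Omega(d(n^{1/d}-1))}$ is essentially trivial, since any formula computing a function depending on all $n$ inputs must contain at least $n$ leaves. So the two exponents agree up to constants in exactly the regime where the bound carries content.

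The substantive obstacle is of course not this corollary but the underlying average-sensitivity theorem for formulas itself, which requires the efficient formula-aware application of the Switching Lemma described in the abstract (the place where the improvement from $\log s$ to $\tfrac{1}{d}\log s$ inside the $d$-th power is gained, exploiting the tree-like structure). All the technical work lies there; Theorem~\ref{thm:main1} is just the quantitative specialization of that bound to $\PARITY$.
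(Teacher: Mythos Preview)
Your deduction is correct, but it is not the route the paper takes. The paper proves Theorem~\ref{thm:main1} directly from the tail bound on the stopping time $\mb q(\Phi)$ (Theorem~\ref{thm:main}) together with Lemma~\ref{la:switching}, using the specific fact that $\PARITY{\uhr}\varrho$ is non-constant with probability $>1-1/\mr e$ when $\varrho\sim\mc R_{1/n}$; this yields $\|\Phi\|\ge(1-2/\mr e)\exp(\Omega(dn^{1/d})-O(d))$, and the $-O(d)$ forces a separate treatment of the regime $d\gtrsim\log n$ via Khrapchenko's lower bound. Your route instead goes through Theorem~\ref{thm:main2}: since $\as(\PARITY)=n$, the bound $\as\le (C(\log s)/d)^d$ immediately gives $\log s\ge (d/C)n^{1/d}$. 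This is more modular and, taken at face value, sidesteps Khrapchenko; on the other hand it presupposes Theorem~\ref{thm:main2}, whose proof in the paper carries the same additive $O(d)$ slack, so the saving is organizational rather than technical. Both arguments rest on the same tail bound.

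Two small remarks on your write-up. First, your case split is unnecessary: since $dn^{1/d}\ge d(n^{1/d}-1)$ for all $n,d$, the bound $s\ge 2^{\Omega(dn^{1/d})}$ you obtain is already stronger than the stated $2^{\Omega(d(n^{1/d}-1))}$, with no regime analysis required. Second, your fallback ``at least $n$ leaves'' is a \emph{leafsize} statement, whereas the theorem concerns gate size; in the paper's convention these differ, and indeed the paper invokes Khrapchenko precisely to get a gate-size bound in the large-$d$ regime. Since your case split is superfluous anyway, this gap is harmless here.
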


\begin{thm}\label{thm:main2}
Depth $d+1$ formulas of size $s$ have average sensitivity $O(\frac{1}{d}\log s)^d$.
\end{thm}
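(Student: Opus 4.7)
The proof proceeds by induction on $d$, aiming to show $\as(F) \le (C \log s / d)^d$ for any depth-$(d+1)$ formula $F$ of size $s$. The base case $d = 0$ is immediate, since a single AND or OR of literals has average sensitivity at most $2$. For the inductive step, I apply a H{\aa}stad-style $p$-random restriction $\rho$ with parameter $p = \Theta(d/\log s)$, which is a factor of $d$ larger than the $p = \Theta(1/\log s)$ used in the circuit argument behind Theorem~\ref{thm:boppana}. The standard identity $\Ex_\rho[\as(F_\rho)] = p \cdot \as(F)$ reduces bounding $\as(F)$ to bounding $\Ex_\rho[\as(F_\rho)]$; after the restriction, the bottom two layers of $F$ should collapse via the Switching Lemma into a shallow decision tree, leaving a formula of depth $d$ to which the inductive hypothesis can be applied.

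The crucial step -- and the source of the improvement over the $O(\log s)^d$ circuit bound -- is to apply the Switching Lemma \emph{per bottom-layer subformula} rather than via a uniform union bound over all bottom gates. In a formula every gate has fan-out $1$, so the bottom-layer subformulas are vertex-disjoint and their sizes $s_i$ satisfy $\sum_i s_i \le s$; this allows us to charge the failure probability of each local switching step against its own $s_i$ instead of the worst-case $s$, and to amortize the total cost across the tree. The ``random process'' advertised in the abstract is, I expect, precisely this iterated, size-weighted restriction scheme: at each of the $d$ levels we can afford a restriction parameter a factor $d$ larger than in the circuit setting, and these savings compound multiplicatively to $d^d$.

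The main obstacle is choosing an inductive hypothesis strong enough to carry the amortization through $d$ levels. A naive reduction $\as(F) \le \sum_i \as(F_i)$ by subadditivity, followed by the induction hypothesis $\as(F_i) \le (C \log s_i/(d-1))^{d-1}$ for depth-$d$ subformulas $F_i$, is doomed: the sum $\sum_i (\log s_i)^{d-1}$ subject to $\sum_i s_i \le s$ is maximised by splitting $s$ into many small pieces and can be as large as $\Omega(s/(d-1)^{d-1})$. So the proof must strengthen the induction by tracking the distribution of sizes across the formula tree and invoking the Switching Lemma in a form that exploits the global size bound, not just the per-subformula sizes. Calibrating the restriction parameter against the decision-tree depth parameter at each level so that the $d$ contributions multiply to exactly $(C\log s/d)^d$ is the key technical challenge.
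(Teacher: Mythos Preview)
Your proposal correctly identifies the obstacle---naive subadditivity plus induction collapses---but your guess at the mechanism is off, and the sketch as written does not close the gap. The paper's ``random process'' is \emph{not} an iterated restriction with a fixed parameter $p = \Theta(d/\log s)$ per level, amortized via the size identity $\sum_i s_i \le s$. The Switching Lemma's failure probability $(5pk)^\ell$ depends only on $p$, $k$, $\ell$, not on the subformula's size, so the additivity of sizes does not by itself sharpen the union bound; your own third paragraph essentially concedes that you do not yet have the strengthened induction hypothesis needed to make this work.

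What the paper actually does is define, for each subformula $\Psi$, an \emph{adaptive random stopping time} $\mb q(\Psi)\in[0,1]$: one samples a single coupled family of restrictions $\{R^{\sigma,\tau}_p\}_{p\in[0,1]}$, and sets $\mb q(\Phi) = \mb p(\Phi)/(14\,\mb k(\Phi))$ where $\mb p(\Phi)=\min_i \mb q(\Psi_i)$ and $\mb k(\Phi)$ is the \emph{realized} maximum decision-tree depth of the children after hitting them with $R^{\sigma,\tau}_{\mb p(\Phi)}$. Thus the restriction parameter at each node is random and depends on how well switching actually went at the level below. The main technical theorem is a tail bound $\Pr[\mb q(\Phi)\le\lambda]\le \|\Phi\|/\exp(\Omega(d\lambda^{-1/d})-O(d))$, proved by induction on depth with a delicate case analysis (events $\mc A$, $\mc B_i$, $\mc C_{i,j}$ indexed by how far $\mb p(\Phi)$ and the children's $\mb q(\Psi)$ deviate from their ``typical'' scales). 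The size-additivity $\|\Phi\|=1+\sum_\Psi\|\Psi\|$ is used, but only to sum the inductive bounds on the children; the $d^d$ saving comes from the adaptivity, which lets one avoid paying a worst-case $\log s$ at every level. The average-sensitivity theorem then follows in a few lines by setting $p=1/\as(\Phi)$ and combining the tail bound on $\mb q(\Phi)$ with a conditional Switching Lemma (Lemma~\ref{la:switching}) to control $\Ex[\mathsf D(\Phi{\uhr}R_p)]$---there is no direct induction on $\as$ at all.
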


Theorems \ref{thm:main1} and \ref{thm:main2} directly strengthen Theorems \ref{thm:hastad} and \ref{thm:boppana} in light of the following

\begin{fact}\label{fact:sep}
Every depth $d+1$ circuit of size $s$ is equivalent to a depth $d+1$ formula of size at most $s^d$.
\end{fact}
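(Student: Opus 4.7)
The plan is to prove this by induction on $d \ge 0$. The base case $d=0$ is immediate: a depth-$1$ circuit consists of a single gate fed directly by input literals, so it is already a formula of size $1 = s^0$.

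For the inductive step, I consider a depth $d+2$ circuit $C$ of size $s$ with output gate $g$, and let $h_1, \dots, h_k$ be the distinct gates feeding into $g$. The subcircuit $C_i \subseteq C$ rooted at $h_i$ has depth at most $d+1$ and size at most $s-1$ (it excludes $g$). By the induction hypothesis, each $C_i$ is equivalent to a depth $d+1$ formula $F_i$ of size at most $(s-1)^d$. Attaching $F_1, \dots, F_k$ as inputs to a fresh copy of the gate $g$ yields a depth $d+2$ formula $F$ computing the same function as $C$. Since $k \le s-1$, its size satisfies
\[
|F| \;\le\; 1 + k(s-1)^d \;\le\; 1 + (s-1)^{d+1}.
\]
To close the induction I would verify the bound $1 + (s-1)^{d+1} \le s^{d+1}$, which is immediate from the factorization $s^{d+1} - 1 = (s-1)(s^d + s^{d-1} + \dots + 1) \ge (s-1)\cdot(s-1)^d$.

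There is essentially no real obstacle here; this is a routine structural induction. The only conceptual point deserving mention is that the subcircuits $C_i$ may share gates inside $C$, and the unfolding procedure duplicates each shared gate across the separate formulas $F_i$. This duplication is precisely the source of the multiplicative factor of (up to) $s$ per level that yields the final $s^d$ bound, and in general it cannot be avoided, which is why the exponent in Fact~\ref{fact:sep} is $d$ rather than something smaller.
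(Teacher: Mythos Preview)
Your induction is correct and is the standard argument for this fact. The paper itself does not supply a proof: it states Fact~\ref{fact:sep} as a known observation and moves on, so there is nothing to compare against beyond noting that your unfolding argument is exactly the intended justification.
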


Theorems \ref{thm:hastad}, \ref{thm:boppana}, \ref{thm:main1}, \ref{thm:main2} are asymptotically tight, since $\PARITY$ is computable by depth $d+1$ circuits (resp.\ formulas) of size $n2^{O(n^{1/d})}$ (resp.\ $2^{O(d(n^{1/d}-1))}$).  

\newpage

The main tool in the proof of Theorems \ref{thm:hastad} and \ref{thm:boppana} is H{\aa}stad's Switching Lemma \cite{hastad1986almost}.  The Switching Lemma states that every small-width CNF or DNF simplifies, with high probability under a random restriction, to a small-depth decision tree.  This yields lower bounds against bounded-depth {\em circuits} via a straightforward depth-reduction argument. In this paper we show how the Switching Lemma can be applied more efficiently to bounded-depth {\em formulas}, though in a less straightforward manner.  

In more detail: for independent uniformly distributed random $\sigma \in \{0,1\}^n$ (``assignment'') and $\tau \in [0,1]^n$ (``timestamp''), we consider the family of restrictions $\{R^{\sigma,\tau}_p\}_{0 \le p \le 1}$ (i.e.\ functions $[n] \to \{0,1,\ast\}$ representing partial assignments to input variables $x_1,\dots,x_n$) where $R^{\sigma,\tau}_p$ sets the variable $x_i$ to $\sigma_i$ if $\tau_i < p$ and leaves $x_i$ unset if $\tau_i \ge p$.  In the usual application of the Switching Lemma to circuits of depth $d+1$, all subcircuits of depth $k+1$ are hit with the restriction $R^{\sigma,\tau}_{p_k}$ for a fixed sequence $p_1 > \dots > p_d$ (typically $p_k = n^{-k/(d+1)}$).  In this paper we achieve sharper bounds against formulas by hitting each subformula $\Phi$ with the restriction $R^{\sigma,\tau}_{\mb q(\Phi)}$ where the parameter $\mb q(\Phi)$ ($=\mb q^{\sigma,\tau}(\Phi)$) is defined inductively, according to a random process indexed by subformulas of $\Phi$.  Our technical main theorem is a tail bound on $\mb q(\Phi)$, viewed as a random variable determined by $\sigma$ and $\tau$.

After preliminary definitions in \S\ref{sec:prelims}, we state and prove our technical main theorem in \S\ref{sec:main} and \S\ref{sec:tail}. As a corollaries, we derive Theorem \ref{thm:main1} in \S\ref{sec:parity} and Theorem \ref{thm:main2} in \S\ref{sec:as}. In \S\ref{sec:vs} we state a further corollary of our results on the relative power of formulas vs.\ circuits.

\section{Preliminaries}\label{sec:prelims}

$\N = \{0,1,2,\dots\}$. $[n] = \{1,\dots,n\}$. $\exp(\lambda) = \mr e^\lambda$. 

\subsection{Formulas}
A {\em formula} is a finite rooted tree whose leafs (``inputs'') are labeled by literals (i.e.\ variables $x_i$ or negated variables $\neg x_i$) and whose non-leafs (``gates'') are labeled by AND or OR. (Gates have unbounded fan-in.) Every formula $\Phi$ computes a boolean function on the same set of variables.

The {\em size} of a formula $\Phi$, denoted by $\|\Phi\|$, is the number of gates in $\Phi$. (Note that every lower bound on {\em size} is also a lower bound on {\em leafsize}, i.e.,\ the number of leaves in a formula.)   The {\em depth} of $\Phi$ is the maximum number of gates on an input-to-output path. Formulas of depth $0$ are literals; formulas of depth $1$ are clauses (i.e.\ an AND or OR of literals). We are often interested in formulas of depth $\ge 2$ and speak of ``depth $d+1$'' where $d$ is an arbitrary positive integer.

\subsection{Boolean functions and restrictions}

A {\em restriction} is a function $\varrho : [n] \to \{0,1,\ast\}$, viewed as a partial assignment of boolean input variables $x_1,\dots,x_n$ to $0$, $1$ or $\ast$ (meaning ``unset'').  For a boolean function $f : \{0,1\}^n \to \{0,1\}$, the restricted function $f{\uhr}\varrho : \{0,1\}^{\varrho^{-1}(\ast)} \to \{0,1\}$ is defined in the usual way.  For $p \in [0,1]$, we write $\mc R_p$ for the distribution on restrictions $\varrho$ where $\Pr[\ \varrho(i)=\ast\ ]=p$ and $\Pr[\ \varrho(i)=0\ ]=\Pr[\ \varrho(i)=1\ ]=(1-p)/2$ independently for all $i \in [n]$.

\subsection{Average sensitivity and decision-tree depth}

The {\em average sensitivity} $\as(f)$ of a boolean function $f$ is the expected number of input bits that, when flipped, change the output of $f$, starting with a random input assignment.  

The {\em decision-tree depth} $\mathsf{D}(f)$ of $f$ is the minimum depth of a decision tree which computes $f$; in particular, $\mathsf{D}(f) = 0$ iff $f$ is constant. 
Two elementary facts which we will use later (see \cite{Boppana97}): for every boolean function $f$,
\begin{align}\label{eq:fact1}
  &&&&\as(f) &\le \mathsf{D}(f) &&\text{(i.e.\ average sensitivity is at most decision-tree depth}),&&&&\\
  \label{eq:fact2}
  &&&&\Ex_{\varrho \sim \mc R_p}[\ \as(f{\uhr}\varrho)\ ] 
  &= p{\cdot}\as(f) &&\text{for all } 0 \le p \le 1.
\end{align}

H{\aa}stad's Switching Lemma relates random restrictions and decision-tree depth. We give a somewhat nonstandard statement (the usual statement is in terms of width-$k$ CNFs and width-$\ell$ DNFs).

\begin{la}[Switching Lemma \cite{hastad1986almost}]\label{la:hastad}
Let $k,\ell \in \N$.
Suppose $f$ is the $\mr{AND}$ or $\mr{OR}$ of an arbitrary family $\{f_i\}$ of boolean functions with $\mathsf D(f_i) \le k$ for all $i$.
Then for all $0 \le p \le \frac12$, 
\[
  \Pr_{\varrho \sim \mc R_p}
  [\
    \mathsf D(f{\uhr}\varrho) \ge \ell
  \ ]
  \le (5pk)^\ell.
\]
\end{la}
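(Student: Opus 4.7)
The plan is to reduce this formulation to H{\aa}stad's original Switching Lemma for bounded-width DNFs/CNFs via the elementary observation that every boolean function $g$ with $\mathsf D(g) \le k$ admits both a $k$-DNF representation (one term per $1$-leaf of a depth-$k$ decision tree for $g$, collecting the at most $k$ literals on the root-to-leaf path) and, dually, a $k$-CNF representation. In the OR case $f = \bigvee_i f_i$, I would express each $f_i$ as a $k$-DNF and concatenate all terms to obtain a single $k$-DNF $F$ computing $f$; in the AND case, each $f_i$ becomes a $k$-CNF and conjoining yields a $k$-CNF $F$. Since $F{\uhr}\varrho$ and $f{\uhr}\varrho$ coincide as functions for every $\varrho$, it suffices to bound $\Pr_\varrho[\mathsf D(F{\uhr}\varrho) \ge \ell]$ when $F$ is a $k$-DNF, with the $k$-CNF case following by De Morgan duality (replacing $F$ by $\neg F$ and swapping the roles of $0$ and $1$ in the restrictions, using that decision-tree depth is invariant under negation).

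For the standard $k$-DNF statement itself I would run Razborov's encoding argument. Fix an order on the terms $T_1, T_2, \dots$ of $F$ and on the literals within each term. Given $\varrho$ with $\mathsf D(F{\uhr}\varrho) \ge \ell$, build a canonical ``bad branch'' greedily: find the first $T_j$ not yet falsified by $\varrho$, record which of its free literals the branch sets to which values in order to keep the tree from terminating, and iterate inside the extended restriction until $\ell$ new coordinates have been touched. This yields an injection $\varrho \mapsto (\varrho', s)$ in which $\varrho'$ extends $\varrho$ by fixing these $\ell$ coordinates to the canonical branch values and $s$ is an auxiliary string of length $O(\ell\log k)$ recording, per block, the position of the literal within its term and the pattern of disagreements between $\varrho$'s actual values and the canonical branch's values. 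A coordinate-by-coordinate comparison of $\mc R_p$-measures gives $\Pr[\varrho] \le (2p/(1-p))^\ell \Pr[\varrho']$, and summing over the injective image with the $O(k)^\ell$ aux count yields $(5pk)^\ell$ for $p \le \tfrac12$ (a careful book-keeping pins down the sharp constant $5$).

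The main obstacle, as always with the Switching Lemma, is designing the encoding so that a decoder can recover $\varrho$ and the bad branch unambiguously from $(\varrho', s)$ --- this is what forces the fixed orderings on terms and literals and the precise format of the auxiliary string. Once that is in place, the measure comparison is a direct pointwise product computation and transfers to the statement of Lemma \ref{la:hastad} via the DNF/CNF reduction above.
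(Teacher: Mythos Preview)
The paper does not actually prove Lemma~\ref{la:hastad}; it is stated as a cited result from H{\aa}stad, with only the remark that the phrasing in terms of decision-tree depth is ``somewhat nonstandard'' compared to the usual width-$k$ CNF/DNF formulation. So there is nothing in the paper to compare your argument against.

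That said, your proposal is exactly the standard way to justify this reformulation and is correct. The reduction step is sound: a depth-$k$ decision tree for $f_i$ yields both a $k$-DNF (from $1$-leaves) and a $k$-CNF (from $0$-leaves), so $\mr{OR}_i f_i$ becomes a single $k$-DNF and $\mr{AND}_i f_i$ a single $k$-CNF, and the AND case reduces to the OR case by negation since $\mathsf D(g)=\mathsf D(\neg g)$. Your sketch of Razborov's encoding argument for the $k$-DNF case is also accurate in outline: the injection $\varrho\mapsto(\varrho',s)$, the measure ratio $(2p/(1-p))^\ell$, and the $O(k)^\ell$ bound on the auxiliary string are the three ingredients, and for $p\le\tfrac12$ careful bookkeeping (position-in-term plus two bits per revealed variable, as in Beame's or Thapen's expositions) does pin down the constant $5$. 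The only part you leave implicit is that exact bookkeeping, but since it is well documented in the literature this is not a gap.
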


\section{A random process associated with formulas}\label{sec:main} 

\begin{df}
Let $\sigma \in \{0,1\}^n$ (``assignment'') and $\tau \in [0,1]^n$ (``timestamp'') be independent uniformly distributed random variables.  
For $0 \le p \le 1$, let $R^{\sigma,\tau}_p : [n] \to \{0,1,\ast\}$ be the restriction
\[
  R^{\sigma,\tau}_p(i) := 
  \begin{cases}
    \sigma_i  &\text{if } \tau_i > p,\\
    \ast      &\text{if } \tau_i \le p.
  \end{cases}
\]
\end{df}

We regard the family of restrictions $\{R^{\sigma,\tau}_p\}_{0 \le p \le 1}$ as a stochastic process where the parameter $p$ represents a ``time'' which starts at $1$ and decreases to $0$.  At the initial time $p=1$, the assignment $\sigma$ is fully masked (i.e.\ $R^{\sigma,\tau}_1$ is all $\ast$'s).  As $p$ decreases, the values of $\sigma$ are gradually unmasked, until the final time $p=0$ when $\sigma$ is fully revealed (i.e.\ $R^{\sigma,\tau}_0 = \sigma$).  Of course, for any fixed $p$, $R^{\sigma,\tau}_p$ is simply a random restriction with distribution $\mc R_p$.  

\begin{df}[Main Definition]
For all formulas $\Phi$, we define the ``stopping time'' $\mb q^{\sigma,\tau}(\Phi) \in [0,1]$ by the following induction:
\begin{itemize}
  \item
    If $\Phi$ has depth $0$ (i.e.\ $\Phi$ is a variable or negated variable), then $\mb q^{\sigma,\tau}(\Phi) := 1$.
  \item
    If $\Phi$ is $\mr{AND}(\Psi_1,\dots,\Psi_m)$ or $\mr{OR}(\Psi_1,\dots,\Psi_m)$, then 
    \[
      \mb q^{\sigma,\tau}(\Phi) :=
      \ds
      \frac{\mb p^{\sigma,\tau}(\Phi)}
      {14{\cdot}
      \mb k^{\sigma,\tau}(\Phi)}
    \]
    where \qquad
    $\ds  \mb p^{\sigma,\tau}(\Phi) 
      := \vphantom{\big|}
      \min_i \mb q^{\sigma,\tau}(\Psi_i),
      \qquad
      \mb k^{\sigma,\tau}(\Phi) := \vphantom{\Big|}
      \max\{1,\,
      \max_i \mathsf D(\Psi_i{\uhr}R^{\sigma,\tau}_{\mb p^{\sigma,\tau}(\Phi)})
      \}.
    $
\end{itemize}
\end{df}

For the sake of readability, we will suppress $\sigma$ and $\tau$ whenever possible and simply write $\mb q(\Phi)$, $\mb p(\Phi)$, $\mb k(\Phi)$.  However, the reader should keep in mind that these random variables are determined, for all formulas $\Phi$, by a single pair of $\sigma$ of $\tau$.  (We will continue to write $\sigma$ and $\tau$ when referring to restrictions $R^{\sigma,\tau}_p$.)

We view $\mb q(\Phi)$ as the stopping time for a stochastic process indexed by formulas $\Phi$.  For $\Phi$ of depth $0$, $\mb q(\Phi)$ is the initial time $1$ (when all variables are masked).  For $\Phi$ of depth $\ge 1$, $\mb q(\Phi)$ is defined in terms of two auxiliary parameters:
\begin{itemize}
\item
$\mb p(\Phi)$ is the 
most advanced (i.e.\ minimum) stopping time $\mb q(\Psi)$ among children $\Psi$ of $\Phi$.
\item
$\mb k(\Phi)$ is the maximum decision-tree depth among children $\Psi$ of $\Phi$ upon being hit with the restriction $R^{\sigma,\tau}_{\mb p(\Phi)}$. (For technical reasons, we set $\mb k(\Phi) = 1$ in the event that $D(\Psi{\uhr}R^{\sigma,\tau}_{\mb p(\Phi)}) = 0$ for all $\Psi$.) 
\end{itemize}

If $\Phi$ is an AND (resp.\ OR), then $\Phi{\uhr}R^{\sigma,\tau}_{\mb p(\Phi)}$ is a $\mb k(\Phi)$-CNF (resp.\ DNF).  The choice of definition $\mb q(\Phi) = \mb p(\Phi)/14{\cdot}\mb k(\Phi)$ allows us to apply the Switching Lemma to $\Phi{\uhr}R^{\sigma,\tau}_{\mb p(\Phi)}$.  This is made precise by the following lemma. (Since the dependence on $\sigma$ and $\tau$ is crucial here, we use explicit notation: $\mb q^{\sigma,\tau}(\Phi)$, etc.)

\begin{la}\label{la:switching}
Let $\Phi$ be a formula of depth $\ge 1$ and let $q \in \Supp(\mb q^{\sigma,\tau}(\Phi))$ (i.e.\ $q = \mb q^{\sigma,\tau}(\Phi)$ for some $\sigma \in \{0,1\}^n$ and $\tau \in [0,1]^n$).  Then for all $0 \le \alpha \le 1$ and $\ell \in \N$,
\[
  \Pr_{\sigma,\tau}
  \Big[\ \mathsf D(\Phi{\uhr}R^{\sigma,\tau}_{\alpha q}) \ge \ell
  \ \Big|\ \mb q^{\sigma,\tau}(\Phi) = q
  \ \Big]
  \le 
  \bigg(\frac{\alpha}{\mr e}\bigg)^\ell.
\]
\end{la}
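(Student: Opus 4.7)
The plan is to apply H{\aa}stad's Switching Lemma (Lemma~\ref{la:hastad}) to the partially-restricted formula $\Phi{\uhr}R^{\sigma,\tau}_p$, where $p = \mb p^{\sigma,\tau}(\Phi)$, and to view the further restriction $R^{\sigma,\tau}_{\alpha q}$ as a fresh random refinement whose parameter is $\alpha q/p = \alpha/(14k)$, with $k = \mb k^{\sigma,\tau}(\Phi)$. The definition $\mb q = \mb p/(14\mb k)$ is set up precisely so that the resulting Switching Lemma estimate $(5\alpha/14)^\ell$ beats the target $(\alpha/\mr e)^\ell$.

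First I would condition on the full ``state at level $p$'': the joint event that $\mb p(\Phi)=p$, $\mb k(\Phi)=k$ (with $p=14kq$), together with the values $(\sigma_i,\tau_i)$ for every $i$ with $\tau_i>p$. An induction on the depth of a formula $\Psi$ shows that $\mb q(\Psi)$ depends only on the portion of the process above level $\mb q(\Psi)$: this is trivial at depth $0$, and at the inductive step, $\mb p(\Psi)=\min_i \mb q(\chi_i)\ge \mb q(\Psi)$ is determined by the data above level $\mb p(\Psi)$ (by hypothesis on the children $\chi_i$), and $\mb k(\Psi)$ is a function of $R^{\sigma,\tau}_{\mb p(\Psi)}$, hence of the same data. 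Consequently $\mb p(\Phi)$, $\mb k(\Phi)$ and the restricted formula $\Phi{\uhr}R^{\sigma,\tau}_p$ are all measurable with respect to the conditioning $\sigma$-algebra, and conditional on this data $\Phi{\uhr}R^{\sigma,\tau}_p$ is a fixed AND or OR whose children $\Psi_i{\uhr}R^{\sigma,\tau}_p$ each have decision-tree depth at most $k$.

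Next I would observe that, conditional on the same data, the remaining coordinates (those $j$ with $\tau_j\le p$) still have $\sigma_j$ uniform in $\{0,1\}$ and $\tau_j$ uniform in $[0,p]$, independently across $j$. In particular $\Pr[\tau_j\le\alpha q]=\alpha q/p=\alpha/(14k)$, so the further refinement taking $R^{\sigma,\tau}_p$ to $R^{\sigma,\tau}_{\alpha q}$ acts on the unset variables exactly as a fresh random restriction drawn from $\mc R_{\alpha/(14k)}$. Since $\alpha/(14k)\le 1/14\le 1/2$, Lemma~\ref{la:hastad} applied to $\Phi{\uhr}R^{\sigma,\tau}_p$ gives
\[
  \Pr\big[\,\mathsf D(\Phi{\uhr}R^{\sigma,\tau}_{\alpha q})\ge\ell\,\big|\,\text{conditioning}\,\big]
  \le \big(5\cdot(\alpha/14k)\cdot k\big)^\ell
  = (5\alpha/14)^\ell \le (\alpha/\mr e)^\ell,
\]
using the elementary inequality $5/14<1/\mr e$ (numerically $0.357\ldots<0.367\ldots$). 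Because this bound is uniform in the conditioning, it survives averaging over the conditional distribution of $(p,k)$ and above-$p$ data given only $\mb q(\Phi)=q$, yielding the lemma.

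The main obstacle is the measurability bookkeeping in the first step: proving the inductive statement that $\mb q(\Psi)$ depends only on above-$\mb q(\Psi)$ data, and handling that ``$\mb q(\Phi)=q$'' is a measure-zero event. The cleanest route is to work with the regular conditional distribution given the full above-$p$ data (which determines $\mb p$ and $\mb k$), verify the bound uniformly, and only then integrate back out. Once this is done, the rest is a direct appeal to the Switching Lemma plus the arithmetic check $5/14<1/\mr e$, which exhibits the constant $14$ in the definition of $\mb q$ as exactly what is needed.
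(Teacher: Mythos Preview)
Your proof is correct and follows the same approach as the paper: condition on the state at level $\mb p(\Phi)$, observe that the further restriction to level $\alpha q$ acts on the unset variables as a fresh $\mc R_{\alpha/(14k)}$-restriction, and apply the Switching Lemma to get $(5\alpha/14)^\ell\le(\alpha/\mr e)^\ell$. One small simplification in the paper's version: it conditions only on the triple $(\mb p(\Phi),\,R^{\sigma,\tau}_{\mb p(\Phi)},\,\mb k(\Phi))$ rather than the full above-$p$ data $(\sigma_i,\tau_i)_{\tau_i>p}$, and since $\mb q(\Phi)$ is countably-valued (it lies in $\{1/(14^{d+1}k_1\cdots k_d):k_i\in\N_{\ge1}\}$ for depth $d+1$) these subevents all have positive probability, so the measure-zero concern you raise does not actually arise.
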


\begin{proof}
Fix $\Phi$ and $q$ as in the hypothesis of the lemma. Since $\Phi$ has depth $\ge 1$, it is the AND or OR of formulas $\Psi_i$. Let
\[
  I := 
  \left\{\,
  (p,\varrho,k) : 
  \begin{aligned}
  &q = p/14k\text{ and there exist } 
  \sigma \in \{0,1\}^n 
  \text{ and } \tau \in [0,1]^n\\
  &\text{such that }
  \mb p^{\sigma,\tau}(\Phi)=p,\
  R^{\sigma,\tau}_{p}=\varrho \text{ and }
  \mb k^{\sigma,\tau}(\Phi)=k
  \end{aligned}
  \,\right\}.
\]
Note that $I$ is nonempty and indexes a partition of the event $\{\mb q^{\sigma,\tau}(\Phi) = q\}$ into subevents $\{\mb p^{\sigma,\tau}(\Phi)=p,\
  R^{\sigma,\tau}_{p}=\varrho
  \text{ and }
  \mb k^{\sigma,\tau}(\Phi)=k\}$. 

To prove the lemma, consider any $(p,\varrho,k) \in I$.  Conditioning on this subevent, we can view $R^{\sigma,\tau}_{\alpha q}$ as the composition of 
$\varrho$ and an independent random restriction $\theta \sim \mc R_{\alpha/14k}$.  Since $\Phi{\uhr}\varrho$ is an AND or OR of functions $\Psi_i{\uhr}\varrho$ of decision-tree depth $\le k$, Lemma \ref{la:hastad} implies
\begin{align*}
  \Pr_{\sigma,\tau}
  \Big[\ \mathsf D(\Phi{\uhr}R^{\sigma,\tau}_{\alpha q}) \ge \ell
  \ \Big|\ 
  \mb p^{\sigma,\tau}&(\Phi)=p,\
  R^{\sigma,\tau}_{p}=\varrho
  \text{ and }
  \mb k^{\sigma,\tau}(\Phi)=k
  \ \Big]\\
  &=
  \Pr_{\theta \sim \mc R_{\alpha/14k}}
  \Big[\ \mathsf D((\Phi{\uhr}\varrho){\uhr}\theta) \ge \ell\ \Big]
  \le
  \bigg(5\bigg(\frac{\alpha}{14k}\bigg)k\bigg)^\ell
  \le 
  \bigg(\frac{\alpha}{\mr e}\bigg)^\ell.\qedhere
\end{align*}
\end{proof}

\section{Tail bound on $\mb q(\Phi)$}\label{sec:tail}

Our technical main theorem is a tail bound on the random variable $\mb q(\Phi)$ ($= \mb q^{\sigma,\tau}(\Phi)$) where the randomness is over independent uniform $\sigma \in \{0,1\}^n$ and $\tau \in [0,1]^n$.  We state the result first with asymptotic notation.

\begin{thm}\label{thm:main}
For every depth $d+1$ formula $\Phi$ and $0 < \lambda \le 1$,
\[
  \Pr
  \big[\
    \mb q(\Phi) \le \lambda
  \ \big]
  \le
  \frac{\|\Phi\|}{\exp(\Omega(d\lambda^{-1/d}) - O(d))}.
\]
\end{thm}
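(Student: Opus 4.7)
The plan is to prove the theorem by induction on the depth $d+1$ of $\Phi$, using Lemma \ref{la:switching} at each level.  The base case $d=0$ is immediate since $\mb q(\Phi)=1/14$ is deterministic for any depth-$1$ formula.  For the inductive step, let $\Phi$ be a depth-$(d+1)$ formula with children $\Psi_1,\ldots,\Psi_m$ of depth at most $d$.  Since $\mb q(\Phi)=\mb p(\Phi)/(14\,\mb k(\Phi))$, the event $\{\mb q(\Phi)\le\lambda\}$ is contained in $\{\mb p(\Phi)\le 14K\lambda\}\cup\{\mb k(\Phi)\ge K\}$ for any threshold $K\ge 1$, giving
\[
  \Pr[\mb q(\Phi)\le\lambda] \;\le\; \sum_i \Pr[\mb q(\Psi_i)\le 14K\lambda] \;+\; \Pr[\mb k(\Phi)\ge K].
\]
I would choose $K\approx\lambda^{-1/d}/(14e)$ so that the propagated parameter $14K\lambda\approx\lambda^{(d-1)/d}/e$ calibrates the child's inductive exponent correctly.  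The inductive hypothesis controls the first sum.  For the second, Lemma \ref{la:switching} applies: conditional on $\mb q(\Psi_i)=q$, writing $\mb p(\Phi)=\alpha q$ with $\alpha\in[0,1]$ yields $\Pr[\mathsf D(\Psi_i\uhr R^{\sigma,\tau}_{\mb p(\Phi)})\ge K\mid\mb q(\Psi_i)=q]\le(\alpha/e)^K\le(1/e)^K$, and a union bound gives $\Pr[\mb k(\Phi)\ge K]\le m\,e^{-K}\le\|\Phi\|\,e^{-K}$.

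The hard part is extracting the correct exponent $\Omega(d\lambda^{-1/d})$ from this induction.  The contribution $e^{-K}$ from Lemma \ref{la:switching} at one level has exponent only $\approx\lambda^{-1/d}/(14e)$, so summing via union bound across all $d$ levels yields $\exp(-\lambda^{-1/d}/(14e)+\log d)$ rather than the desired $\exp(-\Omega(d\lambda^{-1/d}))$---a factor of $d$ off in the exponent.  To recover it, I would switch to a moment-generating-function argument: prove by induction that $\Ex[\mb q(\Phi)^{-\beta}]\le 14^{(d+1)\beta}\,C(\beta)^d\,\|\Phi\|$ for all $\beta\ge 0$, where $C(\beta)=O(\Gamma(\beta+1))$ arises from integrating the switching-lemma tail $(1/e)^K$.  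Markov's inequality then yields $\Pr[\mb q(\Phi)\le\lambda]\le\lambda^\beta\cdot 14^{(d+1)\beta}C(\beta)^d\|\Phi\|$, and setting $\beta\approx\lambda^{-1/d}/14$ together with Stirling's formula applied to $\Gamma(\beta+1)^d$ produces the target $\|\Phi\|\exp(-\Omega(d\lambda^{-1/d})+O(d))$ after some algebra, matching the theorem.

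The critical step in the moment-based induction is the per-level recursion $\Ex[\mb q(\Phi)^{-\beta}]\le 14^\beta C(\beta)\sum_i\Ex[\mb q(\Psi_i)^{-\beta}]$, which comes from the identity $\mb q(\Phi)^{-\beta}=14^\beta\mb k(\Phi)^\beta\max_i\mb q(\Psi_i)^{-\beta}$ and $\max\le\sum$; telescoping using $\sum_i\|\Psi_i\|\le\|\Phi\|$ at each level then completes the induction.  Establishing this recursion reduces to the bound $\Ex[\mb k(\Phi)^\beta\mid\mb q(\Psi_i)=q]\le C(\beta)$ \emph{independently of the fan-in} $m$ of $\Phi$.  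Naively bounding $\mb k(\Phi)^\beta\le\sum_j\mathsf D(\Psi_j\uhr R^{\sigma,\tau}_{\mb p(\Phi)})^\beta$ and integrating gives an unwanted factor of $m$ (which, if accumulated over $d$ levels, would weaken the final bound from $\|\Phi\|$ to leafsize).  The resolution will exploit Lemma \ref{la:switching}'s sharper tail $(\mb p(\Phi)/(eq_j))^K$ for non-argmin children $\Psi_j$, whose decay in $q_j/\mb p(\Phi)$ must cancel the factor $m$ after integration against the joint distribution of the $\mb q(\Psi_j)$'s.  This is the main technical obstacle I anticipate.
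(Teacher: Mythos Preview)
Your diagnosis that a single-threshold split loses a factor of $d$ in the exponent is correct, and it is exactly the obstacle the paper must overcome. However, the moment method you propose has a real gap at the step you yourself flag. The reduction to ``$\Ex[\mb k(\Phi)^\beta\mid\mb q(\Psi_i)=q]\le C(\beta)$ independently of the fan-in $m$'' is simply false: take $\Phi$ of depth $2$ with $m$ depth-$1$ children on disjoint variables. Then every $\mb q(\Psi_i)=1/14$ deterministically, so the conditional expectation is just $\Ex[\mb k(\Phi)^\beta]=\Ex[(\max_j D_j)^\beta]$ where the $D_j$ are independent with tails $\Pr[D_j\ge K]\le e^{-K}$; this is $\Theta((\log m)^\beta)$, not $O_\beta(1)$. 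Your proposed fix (sharper tails for non-argmin children) does nothing here, since every child is an argmin. The deeper problem is that you applied $\max\le\sum$ to $\mb p(\Phi)^{-\beta}$ and then tried to bound $\mb k(\Phi)^\beta$ pointwise; but $\mb k(\Phi)$ is a max over \emph{all} children and cannot be controlled by conditioning on a single $\mb q(\Psi_i)$.

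The paper does not use moments at all. It works directly with the tail bound and recovers the missing factor of $d$ by a geometric decomposition: one parameter $i\in\N$ indexes the dyadic range $\alpha_i<\mb p(\Phi)\le\alpha_{i+1}$ where $\alpha_i=e^{i-1}\ell^{1/d}/(14^d\ell)$, and on this event one has $\mb k(\Phi)>k_i:=e^{i-1}\ell^{1/d}$. The crucial move is then to look at the \emph{single witness child} $\Psi$ realising $\mathsf D(\Psi\uhr R_{\mb p(\Phi)})\ge k_i$ and to case-split on the dyadic range of $\mb q(\Psi)$ (a second parameter $j\in\N$). In every case one obtains, for that same $\Psi$, both a bound on $\Pr[\mb q(\Psi)\le\alpha_{\cdot}]$ from the induction hypothesis and a bound on $\Pr[\mathsf D(\Psi\uhr R_{\cdot})\ge k_i\mid\mb q(\Psi)]$ from Lemma~\ref{la:switching}; the two multiply. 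Summing over $\Psi$ contributes only one factor $\sum_\Psi\|\Psi\|\le\|\Phi\|$, and summing over $i,j$ contributes a convergent geometric series (absorbed into the constant $C$). The point is that tying the large-$\mathsf D$ witness and the small-$\mb q$ constraint to the \emph{same} child avoids the fan-in coupling that breaks your moment recursion.
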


\newpage

In order to have a useable induction hypothesis, we restate Theorem \ref{thm:main} with explicit constants:

\addtocounter{thm}{-1}

\begin{thm}[more precisely]
For every depth $d+1$ formula $\Phi$ and $\ell > 0$,
\[
  \Pr
  \bigg[\
    \mb q(\Phi) \le \frac{1}{14^{d+1}\ell}
  \ \bigg]
  \le
  \|\Phi\|\frac{C^d}{\exp(\mr e^{-2}d\ell^{1/d})}
\]
where
$\ds
  C =
  1 + \sum_{i=0}^\infty 
  \Bigg( 
  \frac{1}{\exp(\mr e^{i-1}-(i+1)\mr e^{-2})} + 
  \sum_{j=0}^\infty \frac{1}{\exp((j+1)\mr e^{i-1} - (i+j+2)\mr e^{-2})} 
  \Bigg)
  \approx
  7.83.
$
\end{thm}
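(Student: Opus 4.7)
The plan is induction on the depth parameter $d$. The base case $d = 0$ is immediate: every depth-$1$ formula $\Phi$ has $\mb p(\Phi) = 1$ (its children are literals with $\mb q = 1$) and $\mb k(\Phi) = 1$ (each literal hit by the all-$\ast$ restriction $R^{\sigma,\tau}_1$ remains a literal of decision-tree depth $1$), so $\mb q(\Phi) = 1/14$ deterministically and the bound reduces to the trivial $\|\Phi\| \ge 1$ when $\ell \le 1$ and is vacuous otherwise.

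For the inductive step, fix $d \ge 1$, assume the theorem for depth-$d$ formulas, and let $\Phi$ be a depth-$(d+1)$ formula with children $\Psi_1, \ldots, \Psi_m$. Since $\mb q(\Phi) = \mb p(\Phi)/(14 \mb k(\Phi))$, the target event is $\{\mb p(\Phi) \le \mb k(\Phi)/(14^d \ell)\}$. I would decompose it into countably many subevents matching the structure $C = 1 + \sum_i(\cdot) + \sum_{i,j}(\cdot)$, using a discretization of $\mb k(\Phi)$ indexed by $i \ge 0$ (with $\mb k(\Phi)$ in the range $[\mr e^{i-1}, \mr e^i)$, say) and a finer discretization of $\mb p(\Phi)$ indexed by $j \ge 0$. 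The ``$1$'' summand handles the base slab $\mb k(\Phi) = 1$ and $\mb p(\Phi) \le 1/(14^d \ell)$: a union bound over children $\Psi_i$ (depth $\le d$) combined with the induction hypothesis contributes $\sum_i \|\Psi_i\| C^{d-1}/\exp(\mr e^{-2}(d-1)\ell^{1/(d-1)}) \le \|\Phi\| C^{d-1}/\exp(\mr e^{-2}(d-1)\ell^{1/(d-1)})$. The $i$-indexed and $(i, j)$-indexed summands handle larger values of $\mb k(\Phi)$ and smaller values of $\mb p(\Phi)$, respectively.

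On each such subevent, the probability factors as a product of two tail bounds. The first is a bound on $\mb p(\Phi)$ obtained by union-bounding over the children $\Psi_i$ and invoking the induction hypothesis with an $(i, j)$-dependent rescaled parameter $\ell'$. The second is a tail bound on $\mb k(\Phi) = \max_i D(\Psi_i \uhr R^{\sigma,\tau}_{\mb p(\Phi)})$: since $\mb p(\Phi) \le \mb q(\Psi_i)$, Lemma~\ref{la:switching} applied conditionally with $\alpha = \mb p(\Phi)/\mb q(\Psi_i) \le 1$ gives $\Pr[D(\Psi_i \uhr R^{\sigma,\tau}_{\mb p(\Phi)}) \ge k \mid \mb q(\Psi_i)] \le (1/\mr e)^k$, and this matches the $\exp(-\mr e^{i-1})$ and $\exp(-(j+1)\mr e^{i-1})$ factors in the definition of $C$. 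Summing these contributions over $(i, j)$ multiplies the depth-$d$ bound by exactly one factor of $C$, yielding $\|\Phi\| C^d/\exp(\mr e^{-2} d \ell^{1/d})$.

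The main obstacle is the analytic bookkeeping: choosing the discretization thresholds and the rescaling $\ell \mapsto \ell'$ so that (a) the induction exponent $\mr e^{-2}(d-1)\ell'^{1/(d-1)}$ combined with the Switching-Lemma contribution $(j+1)\mr e^{i-1}$ is at least the target $\mr e^{-2} d \ell^{1/d}$, which hinges on a concavity calculation relating $(d-1)x^{1/(d-1)}$ and $d x^{1/d}$; (b) the slack terms $(i+1)\mr e^{-2}$ and $(i+j+2)\mr e^{-2}$ in the exponents of $C$ precisely absorb the discretization losses; and (c) the resulting double series converges to the explicit finite constant $C \approx 7.83$ (which it does, since $\exp(-\mr e^{i-1})$ decays doubly exponentially in $i$). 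The appearance of $\mr e^{-2}$ in the target exponent reflects the specific trade-off $\alpha = 1/\mr e$ that is essentially optimal in each application of Lemma~\ref{la:switching}.
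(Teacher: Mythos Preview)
Your overall plan---induction on $d$, a decomposition into events matching the structure $C = 1 + \sum_i(\cdot) + \sum_{i,j}(\cdot)$, each bounded as a product of an induction-hypothesis factor and a Switching-Lemma factor via Lemma~\ref{la:switching}---is exactly the paper's approach. But the specific decomposition you describe is off in a way that matters.

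The issue is \emph{which} quantities are discretized. You propose to slice $\mb k(\Phi)$ by $i$ and $\mb p(\Phi)$ by $j$. The paper instead slices $\mb p(\Phi)$ by $i$ (setting $\alpha_i := \mr e^{i-1}\ell^{1/d}/(14^d\ell)$, so that on the target event $\alpha_i < \mb p(\Phi) \le \alpha_{i+1}$ forces $\mb k(\Phi) > k_i := \mr e^{i-1}\ell^{1/d}$), and then---crucially---slices by $j$ the stopping time $\mb q(\Psi)$ of the \emph{witnessing child} $\Psi$ (the one with $\mathsf D(\Psi{\uhr}R_{\mb p(\Phi)}) \ge k_i$), not $\mb p(\Phi)$. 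This distinction is what makes Lemma~\ref{la:switching} applicable: the lemma needs a deterministic ratio $\alpha$ between the restriction parameter and $\mb q(\Psi)$, whereas your proposed $\alpha = \mb p(\Phi)/\mb q(\Psi)$ is random even after conditioning on $\mb q(\Psi)$, since $\mb p(\Phi)$ depends on \emph{all} children. The paper fixes this by replacing $\mb p(\Phi)$ with the deterministic upper bound $\alpha_{i+1}$ and conditioning on $\mb q(\Psi) \in (\alpha_{i+j+1},\alpha_{i+j+2}]$, which gives $\alpha \le \alpha_{i+1}/\alpha_{i+j+1} = \mr e^{-j}$ and hence the $(j+1)\mr e^{i-1}$ exponent you correctly anticipated; the case $\mb q(\Psi) \le \alpha_{i+1}$ is handled separately (event $\mc B_i$) with $\alpha = 1$. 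Note also that the induction hypothesis is applied to this same witnessing child with threshold $\alpha_{i+j+2}$, not to $\mb p(\Phi)$: the child achieving the minimum $\mb q$ and the child witnessing large decision-tree depth need not coincide. Finally, the ``$1$'' in $C$ corresponds not to $\mb k(\Phi) = 1$ but to the event $\mc A = \{\mb p(\Phi) \le \alpha_0\}$; the induction hypothesis is applied with $\ell' = \mr e\,\ell^{(d-1)/d}$, and the inequality $\mr e^{1/(d-1)} \ge d/(d-1)$ converts the exponent $\mr e^{-2}(d-1)\ell'^{1/(d-1)}$ exactly into the target $\mr e^{-2}d\ell^{1/d}$.
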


\begin{proof}
We first note that the theorem is trivial if $\ell < \mr e^d$ (as the RHS is $> (C/\exp(\mr e^{-1}))^d > 1$ since $C > \exp(\mr e^{-1})$).  Therefore, we assume that $\ell \ge \mr e^d$.  We argue by induction on $d$.  

Consider the base case $d=1$ where $\Phi$ is a depth $2$ formula. Note that $\mb q(\Psi) = 1/14$ for each depth $1$ subformula $\Psi$ of $\Phi$; hence $\mb p(\Phi) = 1/14$.  Also, each $\Psi$ is the AND or OR of decision-trees of depth $1$; so by Lemma \ref{la:hastad},
\[
  \Pr_{\sigma,\tau}
  \Big[
  \
    \mathsf D(\Psi{\uhr}R^{\sigma,\tau}_{1/14}) \ge \ell
  \
  \Big]
  =
  \Pr_{\varrho \sim \mc R_{1/14}}
  \Big[
  \
    \mathsf D(\Psi{\uhr}\varrho) \ge \ell
  \
  \Big]
  \le 
  \bigg(\frac{1}{\mr e}\bigg)^\ell.
\]
Since $\mb q(\Phi) = \mb p(\Phi)/14{\cdot}\mb k(\Phi) = 1/14^2{\cdot}\mb k(\Phi)$, we have
\begin{align*}
  \Pr\Big[\
    \mb q(\Phi) \le \frac{1}{14^2\ell}
  \ \Big]
  =
  \Pr\Big[\
    \mb k(\Phi) \ge \ell
  \ \Big]
  &=
  \Pr\Big[\
  \bigvee_{\Psi}
    \mathsf D(\Psi{\uhr}R^{\sigma,\tau}_{\mb p(\Phi)}) \ge \ell
  \ \Big]\\
  &\le
  \sum_\Psi
  \Pr\Big[\
    \mathsf D(\Psi{\uhr}R^{\sigma,\tau}_{1/14}) \ge \ell
  \ \Big]\\
  &\le
  |\Phi|
  \frac{1}{\exp(\ell)}
  <
  |\Phi|\frac{C^d}{\exp(\mr e^{-2}d\ell^{1/d})}.
\end{align*}

For the induction step, let $d \ge 2$ and assume the theorem holds for $d-1$. Let $\Phi$ be a formula of depth $d+1$.  Let $\Psi$ range over depth-$d$ subformulas of $\Phi$.  In particular, we have $\|\Phi\| = 1+\sum_{\Psi} \|\Psi\|$.

We will define a family of events denoted $\mc A$ and $\mc B_i$ ($i \in \N$) and $\mc C_{i,j}$ ($i,j \in \N$) and show that the union of these events covers the event $\{\mb q(\Phi) \le \frac{1}{14^{d+1}\ell}\}$. We will then bound the probability of each of these events and show that the (infinite) sum of these probabilities is at most $\|\Phi\| \smash{\frac{C^d}{\exp(\mr e^{-2}d\ell^{1/d})}}$. 

For all $i \in \N$, define $k_i$ and $\alpha_i$ by
\begin{align*}
  k_i &:= \mr e^{i-1}\ell^{1/d},
  \qquad\quad
  \alpha_i := \frac{k_i}{14^d\ell} 
  \ 
  \bigg({=}\ \frac{1}{14^d\mr e^{1-i}\ell^{(d-1)/d}}\bigg).
\end{align*}
Events $\mc A$ and $\mc B_i$ and $\mc C_{i,j}$ ($i,j \in \N$) are defined as follows:
\begin{align*}
  \mc A 
  &\ \defiff\
  \Big(\mb p(\Phi) \le \alpha_0\Big),
  \vphantom{\bigvee_\Psi}\\
  \mc B_i
  &\ \defiff\
  \bigvee_\Psi
  \Big(\mb q(\Psi) \le \alpha_{i+1}\Big)
  \wedge
  \Big(\mathsf D(\Psi{\uhr}R^{\sigma,\tau}_{\mb q(\Psi)}) \ge k_i\Big),\\
  \mc C_{i,j}
  &\ \defiff\
  \bigvee_\Psi
  \Big(\alpha_{i+j+1} < \mb q(\Psi) \le \alpha_{i+j+2}\Big)
  \wedge
  \Big(\mathsf D(\Psi{\uhr}R^{\sigma,\tau}_{\alpha_{i+1}}) \ge k_i\Big).
\end{align*}

\noindent{\underline{Claim}}:\quad If
$\ds
  \mb q(\Phi) \le \frac{1}{14^{d+1}\ell}$, then
$\ds \mc A \vee \bigvee_{i = 0}^\infty 
  \bigg(\mc B_i \vee \bigvee_{j = 0}^\infty \mc C_{i,j}\bigg).
$\bigskip

\noindent{\underline{Proof of claim}}:\quad 
Assume $\mb q(\Phi) \le 1/14^{d+1}\ell$ and further assume that $\mc A$ does not hold. Clearly there exists a unique $i \in \N$ such that $\alpha_i < \mb p(\Phi) \le \alpha_{i+1}$ (since $\alpha_i$ is eventually $>1$). Since $\mb q(\Phi) = \mb p(\Phi)/14{\cdot}\mb k(\Phi)$, we have $\mb k(\Phi) > \alpha_i 14^d \ell = k_i$.  Note that $k_i \ge k_0 = \mr e^{-1} \ell^{1/d} \ge 1$ (using the assumption that $\ell \ge \mr e^d$).  Since $\mb k(\Phi) = \max\{1,\max_\Psi \mathsf D(\Psi{\uhr}R^{\sigma,\tau}_{\mb p(\Phi)})\}$, it follows that there exists a $\Psi$ such that $\mathsf D(\Psi{\uhr}R^{\sigma,\tau}_{\mb p(\Phi)}) \ge k_i$. 

Fix an arbitrary choice of $\Psi$ such that $\mathsf D(\Psi{\uhr}R_{\mb p(\Phi)}) \ge k_i$.  There are two cases to consider: either $\mb q(\Psi) \le \alpha_{i+1}$ or $\alpha_{i+j+1} < \mb q(\Psi) \le \alpha_{i+j+2}$ for some $j \in \N$.
\begin{itemize}
\item
Assume $\mb q(\Psi) \le \alpha_{i+1}$.  In this case, we have $\mathsf D(\Psi{\uhr}R_{\mb p(\Phi)}) \le \mathsf D(\Psi{\uhr}R_{\mb q(\Psi)})$ since $\mb p(\Phi) \le \mb q(\Psi)$.  Therefore, $\mathsf D(\Psi{\uhr}R^{\sigma,\tau}_{\mb q(\Psi)}) \ge k_i$.  We conclude that $\mc B_i$ holds.  
\item
Assume $\alpha_{i+j+1} < \mb q(\Psi) \le \alpha_{i+j+2}$ for some $j \in \N$.  We have $\mathsf D(\Psi{\uhr}R_{\mb p(\Phi)}) \le \mathsf D(\Psi{\uhr}R_{\alpha_{i+1}})$ since $\mb p(\Phi) \le \alpha_{i+1}$.  Therefore, $\mathsf D(\Psi{\uhr}R^{\sigma,\tau}_{\alpha_{i+1}}) \ge k_i$.  We conclude that $\mc C_{i,j}$ holds.
\end{itemize}
This concludes the proof of the claim.\bigskip

To complete the proof of the theorem, we will bound the probabilities of events $\mc A$, $\mc B_i$ and $\mc C_{i,j}$ and take a union bound.  We ignore the fact that all but finitely many of these events have zero probability, since $\Pr[\ \mc B_i\ ] = 0$ (resp.\ $\Pr[\ \mc C_{i,j}\ ] = 0$) for all $\alpha_i > 1$ (resp.\ $\alpha_{i+j+1} > 1$).  Instead, we show that $\Pr[\ \mc B_i\ ]$ is exponentially decreasing in $i$, while $Pr[\ \mc C_{i,j}\ ]$ is exponentially decreasing in $j$ and doubly exponentially decreasing in $i$.

We first bound the probability of $\mc A$:
\begin{align*}
  \Pr[\,\mc A\,] 
  =
  \Pr\Big[\ \bigvee_{\Psi}
   \mb q(\Psi) 
  \le \frac{1}{14^d\mr e\ell^{(d-1)/d}}\ \Big]
  &\le
  \sum_{\Psi}
  \Pr\Big[\ \mb q(\Psi) \le \frac{1}{14^d\mr e\ell^{(d-1)/d}}\ \Big]
  \\
  &\le
  \|\Phi\|
  \frac{C^{d-1}}{\exp(\mr e^{-2}(d-1)\mr e^{1/(d-1)}\ell^{1/d})}
  &&\text{(induction hypothesis)}\\
  &\le 
  \|\Phi\|
  \frac{C^{d-1}}{\exp(\mr e^{-2}d\ell^{1/d})}
  &&\text{(using $\mr e^{1/(d-1)} \ge \tsfrac{d}{d-1}$).}
\end{align*}

We next bound the probability of $\mc B_i$:
\begin{align*}
  \Pr[\, \mc B_i\, ]
  &=
  \Pr\Big[\
  \bigvee_\Psi
  \Big(\mb q(\Psi) \le \alpha_{i+1}\Big)
  \wedge
  \Big(\mathsf D(\Psi{\uhr}R^{\sigma,\tau}_{\mb q(\Psi)}) \ge k_i\Big)
  \ \Big]
  \\
  &\le
  \sum_\Psi
  \Pr\Big[\
  \mb q(\Psi) \le \alpha_{i+1}
  \ \Big]
  \Pr\Big[\
  \mathsf D(\Psi{\uhr}R^{\sigma,\tau}_{\mb q(\Psi)}) \ge k_i
  \ \Big|\
  \mb q(\Psi) \le \alpha_{i+1}
  \ \Big]
  \hspace{-1.2in}\\
  &\le
  \bigg(\frac{1}{\mr e}\bigg)^{k_i}
  \sum_\Psi
  \Pr\Big[\
  \mb q(\Psi) \le \alpha_{i+1}
  \ \Big]
  &&\text{(Lemma \ref{la:switching})}\\
  &=
  \frac{1}{\exp(\mr e^{i-1}\ell^{1/d})}
  \sum_\Psi
  \Pr\Big[\
  \mb q(\Psi) \le \frac{1}{14^d\mr e^{-i}\ell^{(d-1)/d}}
  \ \Big]\\
  &\le
  \frac{1}{\exp(\mr e^{i-1}\ell^{1/d})}
  \|\Phi\|
  \frac{C^{d-1}}{\exp(\mr e^{-2}(d-1)\mr e^{-i/(d-1)}\ell^{1/d})}
  &&\text{(induction hypothesis)}\\
  &\le
  \frac{1}{\exp(\mr e^{i-1}\ell^{1/d})}
  \|\Phi\|
  \frac{C^{d-1}}{\exp(\mr e^{-2}(d-1)\ell^{1/d} - i\mr e^{-2}\ell^{1/d})}
  &&\text{($\mr e^{-i/(d-1)} \ge 1-\tsfrac{i}{d-1}$)}\\
  &= 
  \frac{1}{\exp((\mr e^{i-1}-(i+1)\mr e^{-2})\ell^{1/d})}
  \|\Phi\|
  \frac{C^{d-1}}{\exp(\mr e^{-2}d\ell^{1/d})}\\
  &\le
  \frac{1}{\exp(\mr e^{i-1}-(i+1)\mr e^{-2})}
  \|\Phi\|
  \frac{C^{d-1}}{\exp(\mr e^{-2}d\ell^{1/d})}.
\end{align*}
The last inequality uses the assumption $\ell^{1/d} \ge 1$ as well as the nonnegativity of $\mr e^{i-1} - (i+1)\mr e^{-2}$ for all $i \in \N$.

Finally, we bound the probability of $\mc C_{i,j}$:
\begin{align*}
  \Pr[\, \mc C_{i,j}\, ]
  &=
  \Pr\Big[\
  \bigvee_\Psi
  \Big(\alpha_{i+j+1} < \mb q(\Psi) \le \alpha_{i+j+2}\Big)
  \wedge
  \Big(\mathsf D(\Psi{\uhr}R^{\sigma,\tau}_{\alpha_{i+1}}) \ge k_i\Big)
  \Big]\hspace{-1in}\\
  &\le
  \sum_\Psi
  \Pr\Big[\
    \mb q(\Psi) \le \alpha_{i+j+2}
  \ \Big]
  \Pr\Big[\
  \mathsf D(\Psi{\uhr}R^{\sigma,\tau}_{\alpha_{i+1}}) \ge k_i
  \ \Big|\ 
    \alpha_{i+j+1} < \mb q(\Psi) \le \alpha_{i+j+2}
  \Big]
  \hspace{-1in}\\
  &\le
  \left(\frac{\alpha_{i+1}/\alpha_{i+j+1}}{\mr e}\right)^{k_i}
  \sum_\Psi
  \Pr\Big[\
    \mb q(\Psi) \le \alpha_{i+j+2}
  \ \Big]
  &&\text{(Lemma \ref{la:switching})}\\
  &=
  \frac{1}{\exp((j+1)\mr e^{i-1}\ell^{1/d})}
  \sum_\Psi
  \Pr\Big[\
    \mb q(\Psi) \le \frac{1}{14^d\mr e^{-(i+j+1)}\ell^{(d-1)/d}}
  \ \Big]\hspace{-1in}\\
  &\le
  \frac{1}{\exp((j+1)\mr e^{i-1}\ell^{1/d})}
  \|\Phi\|
  \frac{C^{d-1}}{\exp(\mr e^{-2}(d-1)\mr e^{-(i+j+1)/(d-1)}\ell^{1/d})}
  &&\text{(ind.\ hyp.)}\\
  &\le
  \frac{1}{\exp((j+1)\mr e^{i-1}\ell^{1/d})}
  \|\Phi\|
  \frac{C^{d-1}}{\exp(\mr e^{-2}(d-1)\ell^{1/d} - (i+j+1)\mr e^{-2}\ell^{1/d})}\hspace{-1in}\\
  &=
  \frac{1}{\exp(((j+1)\mr e^{i-1} - (i+j+2)\mr e^{-2})\ell^{1/d})}
  \|\Phi\|
  \frac{C^{d-1}}{\exp(\mr e^{-2}d\ell^{1/d})}\\
  &\le
  \frac{1}{\exp((j+1)\mr e^{i-1} - (i+j+2)\mr e^{-2})}
  \|\Phi\|
  \frac{C^{d-1}}{\exp(\mr e^{-2}d\ell^{1/d})}.
\end{align*}
The last inequality uses the assumption $\ell^{1/d} \ge 1$ and the nonnegativity of $(j+1)\mr e^{i-1} - (i+j+2)\mr e^{-2}$ for all $i,j \in \N$.

We finish the proof by taking a union bound:
\begin{align*}
  \Pr\Big[\ \mb q(\Phi) \le \frac{1}{14^{d+1}\ell}\ \Big]
  &\le 
  \Pr[\, \mc A\,] + 
  \sum_{i=0}^\infty
  \bigg(
  \Pr[\, \mc B_i\,] + \sum_{j=0}^\infty \Pr[\,\mc C_{i,j}\, ]\bigg)
  \le
  \|\Phi\|
  \frac{C^d}{\exp(\mr e^{-2}d\ell^{1/d})}.
  \qedhere
\end{align*}
\end{proof}

\section{PARITY}\label{sec:parity}

We use the results of the last section to prove our lower bound for the $\PARITY$ function.

\setcounter{thm}{2}

\begin{thm}[restated]\label{thm:formulas2}
Depth $d+1$ formulas computing $\PARITY$ require size $\exp(\Omega(d(n^{1/d}-1)))$.
\end{thm}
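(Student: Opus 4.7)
My plan is to derive Theorem \ref{thm:formulas2} from Theorem \ref{thm:main} together with Lemma \ref{la:switching}, exploiting the fact that PARITY remains essentially PARITY under any partial assignment. Concretely, if $\Phi$ computes PARITY and $\varrho$ is any restriction, then $\Phi\uhr\varrho$ computes PARITY or its negation on the unset variables, so $\mathsf D(\Phi\uhr\varrho) = |\varrho^{-1}(\ast)|$.

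The first step is to choose $\lambda := c/n$ for a sufficiently large absolute constant $c > 0$. Under $R^{\sigma,\tau}_\lambda$, each variable is unset independently with probability $\lambda$, so the probability that at least one variable remains unset is $1-(1-\lambda)^n \ge 1-\mr e^{-c}$. By the observation above,
\[
  \Pr\big[\,\mathsf D(\Phi\uhr R^{\sigma,\tau}_\lambda) \ge 1\,\big] \;\ge\; 1-\mr e^{-c}.
\]

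The second step is to bound the same probability from above using Lemma \ref{la:switching}. Note the trivial monotonicity: if $p \le p'$, then $R^{\sigma,\tau}_p$ sets (weakly) more variables than $R^{\sigma,\tau}_{p'}$, whence $\mathsf D(\Phi\uhr R^{\sigma,\tau}_p) \le \mathsf D(\Phi\uhr R^{\sigma,\tau}_{p'})$. So on the event $\{\mb q(\Phi) \ge \lambda\}$ we have $\mathsf D(\Phi\uhr R^{\sigma,\tau}_\lambda) \le \mathsf D(\Phi\uhr R^{\sigma,\tau}_{\mb q(\Phi)})$. Applying Lemma \ref{la:switching} with $\alpha = \ell = 1$ and integrating over the value of $\mb q(\Phi)$ gives $\Pr[\mathsf D(\Phi\uhr R^{\sigma,\tau}_{\mb q(\Phi)}) \ge 1] \le 1/\mr e$. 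Combining the two bounds,
\[
  \Pr[\mb q(\Phi) < \lambda] \;\ge\; (1-\mr e^{-c}) - 1/\mr e,
\]
which is a positive absolute constant once $c$ is chosen large enough.

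Finally, I would plug this into Theorem \ref{thm:main}: the constant lower bound on $\Pr[\mb q(\Phi) \le \lambda]$ forces $\|\Phi\| \ge \Omega(1)\cdot\exp(\Omega(d\lambda^{-1/d}) - O(d)) = \exp(\Omega(dn^{1/d}) - O(d)) = \exp(\Omega(d(n^{1/d}-1)))$, as claimed. I anticipate no real obstacle: the entire argument is a plug-and-play corollary of Theorem \ref{thm:main} once one notes the restriction-invariance of PARITY and the monotonicity of $\mathsf D(\Phi\uhr R^{\sigma,\tau}_p)$ in $p$. The only arguably delicate point is the integration of the conditional statement of Lemma \ref{la:switching}, but this is routine.
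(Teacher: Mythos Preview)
Your approach is essentially identical to the paper's: both combine Theorem~\ref{thm:main} with Lemma~\ref{la:switching} (at $\alpha=\ell=1$) exactly as you describe, obtaining $\|\Phi\| \ge \Omega(1)\cdot\exp(\Omega(dn^{1/d}) - O(d))$. (In fact $c=1$, i.e.\ $\lambda = 1/n$, already suffices since $1-2/\mr e>0$; this is what the paper uses.)

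The one genuine gap is your final equality $\exp(\Omega(dn^{1/d}) - O(d)) = \exp(\Omega(d(n^{1/d}-1)))$. This identity is only valid in the regime $d \le c_1 \ln n$ for a suitable constant $c_1$. With the actual constants coming out of Theorem~\ref{thm:main}, the $O(d)$ loss exceeds the $\Omega(dn^{1/d})$ gain once $n^{1/d}$ is close enough to $1$; so for $d \gg \ln n$ the left-hand side can be $\ll 1$, whereas the right-hand side is $n^{\Omega(1)}$ (since $d(n^{1/d}-1) \ge \ln n$ for all $d \ge 1$). The paper addresses this explicitly with a case split: for $d \le c_1\ln n$ your argument (equivalently, the paper's) gives the bound; for $d > c_1\ln n$ one notes that $d(n^{1/d}-1) = \Theta(\ln n)$, and then Khrapchenko's $n^2$ leafsize lower bound for $\PARITY$ (hence a gate-size lower bound of $n$) already yields $\|\Phi\| \ge \exp(\Omega(d(n^{1/d}-1)))$. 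Your plan is correct but incomplete without this second case.
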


\begin{proof}
Suppose $\Phi$ is a depth $d+1$ formula computing $\PARITY$.  Then
\[
  \Pr_{\varrho \sim \mc R_{1/n}}
  \big[\
    \Phi{\uhr}\varrho \text{ is non-constant}
  \ \big]
  = 
  1 - \Big(1-\frac{1}{n}\Big)^n
  >
  1 - \frac{1}{\mr e}.
\]
On the other hand, by Theorem \ref{thm:main} and Lemma \ref{la:switching},
\begin{align*}
  \Pr_{\varrho \sim \mc R_{1/n}}
  \big[\
    \Phi{\uhr}\varrho \text{ is non-constant}
  \ \big]
  &=
  \Pr_{\sigma,\tau}
  \big[\
    \mathsf D(\Phi{\uhr}R^{\sigma,\tau}_{1/n}) \ge 1
  \ \big]\\
  &\le
  \Pr
  \big[\
    \mathsf D(\Phi{\uhr}R^{\sigma,\tau}_{\max\{1/n,\mb q(\Phi)\}}) \ge 1
  \ \big]
  \vphantom{\Big|}\\
  &\le
  \Pr
  \big[\
    \mb q(\Phi) \le 1/n
  \ \big]
  +
  \Pr
  \big[\
    \mathsf D(\Phi{\uhr}R^{\sigma,\tau}_{\mb q(\Phi)}) \ge 1
  \ \big]
  \vphantom{\Big|}\\
  &\le
  \frac{\|\Phi\|}{\exp(\Omega(dn^{1/d})-O(d))}
  +
  \frac{1}{\mr e}.
\end{align*}
Therefore,
\[
  \|\Phi\| \ge \Big(1-\frac{2}{\mr e}\Big)
  \exp\Big(\Omega(dn^{1/d})-O(d)\Big).
\]
It follows that there exist universal constants $c_0,c_1>0$ (determined by the constants in the $\Omega(\cdot)$ and $O(\cdot)$) such that $\|\Phi\| \ge \exp(c_0d(n^{1/d}-1))$ in the regime $d \le c_1\ln n$.

In the regime $d > c_1\ln n$, we have $d(n^{1/d}-1) = \Theta(\ln n)$, more precisely,
\[
  \ln n < d(n^{1/d}-1) < c_1(e^{c_1}-1)\ln n.
\]
Note that $d(n^{1/d}-1)$ is decreasing in $d$ and $\lim_{d \to \infty} d(n^{1/d}-1) = \ln n$.  Invoking Khrapchenko's $n^2$ leafsize lower bound \cite{Khrap71} (which implies a (gate)size lower bound of $n$), we get a tight lower bound of $\exp(\Omega(d(n^{1/d}-1)))$ which is valid for all $d$ and $n$.
\end{proof}

\section{Average Sensitivity}\label{sec:as}

\setcounter{thm}{3}

\begin{thm}[restated]
Depth $d+1$ formulas of size $s$ have average sensitivity $O(\frac{1}{d}\ln s)^d$.
\end{thm}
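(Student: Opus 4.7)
The plan is to apply Fact \eqref{eq:fact2} with a carefully chosen $p$ of order $(d/\ln s)^d$, so that $\as(\Phi) = p^{-1}\,\Ex_{\sigma,\tau}[\as(\Phi{\uhr}R^{\sigma,\tau}_p)]$, and then show that the expected average sensitivity of the restricted formula is $O(1)$.  Since $\as(\Phi{\uhr}\varrho) \le \mathsf{D}(\Phi{\uhr}\varrho)$ by Fact \eqref{eq:fact1}, it suffices to bound $\Ex_{\sigma,\tau}[\mathsf{D}(\Phi{\uhr}R^{\sigma,\tau}_p)] = O(1)$.  Concretely, I would pick $p := 1/(2\cdot 14^{d+1}\ell_0)$ with $\ell_0 := (C\ln s/d)^d$ for a sufficiently large absolute constant $C$; this forces the target $1/p = O((\ln s/d)^d)$.

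To bound this expectation, the key step is to condition on the stopping time $\mb q(\Phi)$ and split on the event $\mc E \defiff \{\mb q(\Phi) \ge 2p\}$.  On $\mc E$, writing $p = \alpha\,\mb q(\Phi)$ with $\alpha \le 1/2$ and invoking Lemma \ref{la:switching} conditionally on $\mb q(\Phi) = q$ yields the geometric tail $\Pr[\mathsf{D}(\Phi{\uhr}R^{\sigma,\tau}_p) \ge \ell' \mid \mb q(\Phi) = q \ge 2p] \le (p/(q\mr e))^{\ell'} \le (1/(2\mr e))^{\ell'}$, so
\[
  \Ex\big[\mathsf{D}(\Phi{\uhr}R^{\sigma,\tau}_p) \cdot \I_{\mc E}\big]
  = \sum_{\ell'\ge 1}\Pr[\mathsf{D}(\Phi{\uhr}R^{\sigma,\tau}_p) \ge \ell',\, \mc E]
  \le \sum_{\ell'\ge 1}(2\mr e)^{-\ell'}
  = O(1),
\]
uniformly in $\Phi$.

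On the complementary event $\neg\mc E$, I would use the crude bound $\mathsf{D}(\Phi{\uhr}R^{\sigma,\tau}_p) \le n$ together with Theorem \ref{thm:main}, which controls $\Pr[\mb q(\Phi) < 2p] \le s/\exp(\Omega(d(2p)^{-1/d}) - O(d))$.  Choosing $C$ large enough that $d(2p)^{-1/d} = \Omega(C\ln s)$, this tail becomes $s^{1-\Omega(C)}\mr e^{O(d)}$, and (in the regime $n \le s^{O(1)}$, or more generally via a Cauchy-Schwarz refinement that replaces $n$ by $\sqrt{\Ex[|\varrho^{-1}(\ast)|^2]} \approx np$ using $|\varrho^{-1}(\ast)|\sim\Bin(n,p)$) we get $\Ex[\mathsf{D}(\Phi{\uhr}R^{\sigma,\tau}_p)\cdot\I_{\neg\mc E}] = O(1)$.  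Combining the two contributions yields $\Ex[\mathsf{D}(\Phi{\uhr}R^{\sigma,\tau}_p)] = O(1)$, and hence $\as(\Phi) \le O(1)/p = O((\ln s/d)^d)$, as required.

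The main obstacle is balancing the two tails simultaneously: the Switching-Lemma tail on $\mc E$ demands $p \le \mb q(\Phi)/2$ (so $p$ small), while the Main-Theorem tail controlling $\Pr[\neg \mc E]$ demands $1/p$ not too large; tuning $p \sim (d/\ln s)^d$ with $C$ a sufficiently large constant threads this needle.  The only residual subtlety is the factor of $n$ entering from the trivial bound $\mathsf{D} \le n$ on $\neg\mc E$, which the Cauchy--Schwarz refinement (or a WLOG reduction to $n \le s^{O(1)}$, since otherwise $\as(\Phi)$ is dominated by $n$ and the bound is trivial) absorbs cleanly.
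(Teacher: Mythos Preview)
Your approach matches the paper's skeleton: use Fact \eqref{eq:fact2} to write $\as(\Phi) = p^{-1}\Ex[\as(\Phi{\uhr}R^{\sigma,\tau}_p)]$, bound by $\mathsf D$ via Fact \eqref{eq:fact1}, and control $\Pr[\mathsf D(\Phi{\uhr}R^{\sigma,\tau}_p) \ge k]$ by splitting into $\{\mb q(\Phi) \le p\}$ (handled by Theorem \ref{thm:main}) and its complement (handled by Lemma \ref{la:switching} with $\alpha \le 1$). The only stylistic difference is that the paper sets $p := 1/\as(\Phi)$ rather than fixing $p$ explicitly, so the target inequality drops out by rearrangement instead of by tuning a constant $C$.

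There is, however, a genuine gap in your treatment of $\neg\mc E$. Since ``size'' here counts gates, a depth $d{+}1$ formula of size $s$ can have unbounded bottom fan-in and hence depend on arbitrarily many variables; the crude bound $\mathsf D(\Phi{\uhr}R^{\sigma,\tau}_p) \le n$ (or $\le |\varrho^{-1}(\ast)|$) is therefore not controlled by any function of $s$. Your proposed fix (a) is false as stated---large $n$ does not make the bound $\as(\Phi)\le O(\tfrac{1}{d}\ln s)^d$ trivial---and your Cauchy--Schwarz fix (b) only replaces $n$ by roughly $np$, which is still unbounded in $n$. The paper closes this gap by first reducing, at the cost of a constant factor in $\as$, to formulas with bottom fan-in at most $s$; then the formula depends on at most $s^2$ variables, so the contribution of the bad event is at most $s^2\cdot\Pr[\mb q(\Phi)\le p]\le s^3/\exp(\Omega(d\,\as(\Phi)^{1/d})-O(d))$, and the theorem follows. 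With that reduction in hand, your argument goes through as well.
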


\begin{proof}
Let $\Phi$ be a formula of depth $d+1$ and size $s$ (recall that size is the number of gates).  Assume $\as(\Phi) \ge 1$, since otherwise the theorem is trivial.  We further assume that $\Phi$ has bottom fan-in $\le s$; otherwise it is easily shown that $\as(\Phi) = O(\as(\Phi'))$ where $\Phi'$ is obtained from $\Phi$ by replacing every bottom AND (resp.\ OR) gate with fan-in $>s$ with $0$ (resp.\ $1$).  In particular, $\Phi$ has leafsize $\le s^2$, so it depends on $\le s^2$ distinct variables.

Letting $p = 1/\as(\Phi)$ and using facts (\ref{eq:fact1}) and (\ref{eq:fact2}), we have
\begin{align*}
  1 
  &= 
  p{\cdot}\as(\Phi)
  =
  \Ex_{\varrho \sim \mc R_p}\big[\ \as(\Phi{\uhr}\varrho)\ \big]
  \le
  \Ex_{\sigma,\tau}\big[\ \mathsf D(\Phi{\uhr}R^{\sigma,\tau}_p)\ \big]
  =
  \sum_{k=1}^{s^2}
  \Pr_{\sigma,\tau}\big[\ \mathsf D(\Phi{\uhr}R^{\sigma,\tau}_p) \ge k\ \big].
\end{align*}
For all $k \in \N$, by Theorem \ref{thm:main} and Lemma \ref{la:switching},
\begin{align*}
  \Pr_{\sigma,\tau}\big[\ \mathsf D(\Phi{\uhr}R^{\sigma,\tau}_p) \ge k\ \big]
  &\le
  \Pr_{\sigma,\tau}\big[\ \mathsf D(\Phi{\uhr}R^{\sigma,\tau}_{\max\{p,\mb q(\Phi)\}}) \ge k\ \big]
  \\
  &\le
  \Pr\big[\ \mb q(\Phi) \le p\ \big]
  +
  \Pr\big[\ 
  \mathsf D(\Phi{\uhr}R^{\sigma,\tau}_{\mb q(\Phi)}) 
  \ge k\ \big]\\
  &\le
  \frac{s}{\exp(\Omega(d{\cdot}\as(\Phi)^{1/d})-O(d))}
  +
  \frac{1}{\mr e^k}.
\end{align*}
Combining these inequalities, we have
\begin{align*}
  \exp(\Omega(d{\cdot}\as(\Phi)^{1/d})-O(d))
  &\le \frac{s^3}{1 - \sum_{k=1}^\infty 
  \mr e^{-k}}
  = 
  \frac{1-\mr e^{-1}}{1-2\mr e^{-1}}
  s^3
  =
  O(s^3).
\end{align*}
It follows that
$\Omega(d{\cdot}\as(\Phi)^{1/d})
  \le 
  3\ln s + O(d)$ and therefore $\as(\Phi) = O(\frac{1}{d}\ln s)^d$.
\end{proof}

\section{Formulas vs.\ Circuits}\label{sec:vs}

\setcounter{thm}{10}

Our lower bound for $\PARITY$ (Theorem \ref{thm:main1}) implies a separation between the power of depth $d+1$ formulas vs.\ circuits.  We write $\{$poly-size depth $d+1$ circuits/formulas$\}$ for the non-uniform complexity class of languages computable by $n^{O(1)}$-size depth $d+1$ circuits/formulas where $d(n)$ is an arbitrary function of $n$.

\begin{cor}
For all $d(n) = o(\log n)$ with $\lim_{n \to \infty} d(n) = \infty$,
\begin{equation}\label{eq:sep1}
  \{\text{poly-size depth $d+1$ formulas}\} \ne \{\text{poly-size depth $d+1$ circuits}\}.
\end{equation}
Moreover, for all $d \le C\frac{\log n}{\log\log n}$ (for some universal constant $C > 0$),
\begin{equation}\label{eq:sep2}
  \{\text{poly-size depth $d+1$ circuits}\} \nsubseteq \{\text{$n^{o(d)}$-size depth $d+1$ formulas}\}.
\end{equation}
\end{cor}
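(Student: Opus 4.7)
The plan is to exhibit, for each $d = d(n)$ in the relevant range, a suitable padded $\PARITY$ as the separating hard function. For $m = m(n) \le n$, let $f_{n,m} : \{0,1\}^n \to \{0,1\}$ denote the parity of the first $m$ inputs. Then $f_{n,m}$ admits a depth $d+1$ circuit of size $m \cdot 2^{O(m^{1/d})}$ (use the standard construction on the first $m$ bits and ignore the rest), while any depth $d+1$ formula for $f_{n,m}$, restricted by fixing the last $n-m$ inputs, computes $\PARITY$ on $m$ variables and thus has size at least $2^{\Omega(d(m^{1/d}-1))}$ by Theorem \ref{thm:main1}. The task reduces to choosing $m$ so that these two bounds land on opposite sides of the target size class.

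For \eqref{eq:sep1}, given $d = d(n) = o(\log n)$ with $d \to \infty$, I would take $m(n) := \min\bigl(n,\lfloor(\log n)^d\rfloor\bigr)$. In the regime $d \le \log n/\log\log n$, we have $m = \lfloor(\log n)^d\rfloor$ and $m^{1/d} = \Theta(\log n)$, so the circuit upper bound is $m \cdot n^{O(1)} \le n^{O(1)}$ while the formula lower bound is $2^{\Omega(d\log n)} = n^{\Omega(d)}$, which is superpolynomial since $d \to \infty$. In the regime $d > \log n/\log\log n$, we have $m = n$ and $n^{1/d} < \log n$, giving a poly-size circuit; and writing $u := \ln n/d \to \infty$, one checks $d(n^{1/d}-1)/\log n = \Theta(e^u/u) \to \infty$, so the formula lower bound is again superpolynomial. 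Hence $f_{n,m(n)}$ witnesses the separation.

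For \eqref{eq:sep2}, take the universal constant $C := 1$, so $d \le \log n/\log\log n$ guarantees $(\log n)^d \le n$. Setting $m := \lfloor(\log n)^d\rfloor$, one gets a depth $d+1$ circuit of size $m \cdot 2^{O(\log n)} = n^{O(1)}$, while Theorem \ref{thm:main1} forces any depth $d+1$ formula to have size at least $2^{\Omega(d\log n)} = n^{\Omega(d)}$, which fails to be $n^{o(d)}$.

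The principal task, and the only subtlety, is calibrating $m$ so that $m^{1/d}$ sits near the critical threshold $\Theta(\log n)$: small enough that depth $d+1$ parity circuits on $m$ bits are polynomial in $n$, yet large enough for Theorem \ref{thm:main1} to deliver a superpolynomial (respectively $n^{\Omega(d)}$) formula lower bound. The one spot requiring real care is the boundary regime $d \gtrsim \log n/\log\log n$ in part \eqref{eq:sep1}, where the natural padding $(\log n)^d$ overshoots $n$; fortunately $d = o(\log n)$ still keeps $d \cdot n^{1/d}$ comfortably above $\log n$, so full $\PARITY$ on $n$ variables takes over as the witness.
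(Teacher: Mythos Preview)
The paper states this corollary without proof, merely remarking that it follows from Theorem~\ref{thm:main1}. Your argument is correct and supplies precisely the natural details: exhibit padded $\PARITY$ on $m \approx (\log n)^d$ bits as the separating function, invoking the standard depth $d{+}1$ circuit upper bound $m\cdot 2^{O(m^{1/d})}$ against the formula lower bound $2^{\Omega(d(m^{1/d}-1))}$ from Theorem~\ref{thm:main1}. The choice $C=1$ in \eqref{eq:sep2} is fine, and your case split in \eqref{eq:sep1}---falling back to full $\PARITY$ on $n$ bits once $(\log n)^d$ exceeds $n$, and using $d=o(\log n)$ to keep $d(n^{1/d}-1)/\log n \to \infty$---is handled cleanly.
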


Separation (\ref{eq:sep1}) may be regarded as the depth $d+1$ analogue of the conjectured separation $\{$poly-size formulas$\} \ne \{$poly-size circuits$\}$, also known as $\NCone \ne \Ppoly$.  By Spira's theorem \cite{Spira71}, every poly-size formula is equivalent to a poly-size formula of depth $O(\log n)$; thus, extending (\ref{eq:sep1}) from depth $o(\log n)$ to depth $O(\log n)$ would imply $\NCone \ne \Ppoly$ (in fact $\NCone \ne \ACone$). 

For the smaller range of $d \le c\frac{\log n}{\log\log n}$, we get the stronger separation (\ref{eq:sep2}).  In light of Fact \ref{fact:sep}, this is the strongest possible separation between formulas and circuits of the same depth.

We remark that until recently not even the weak separation (\ref{eq:sep1}) was known to hold for any super-constant $d \nleq O(1)$.  The first progress on this question was made in \cite{rossman2014formulas}, where (\ref{eq:sep2}) was shown to hold for all $d \le \log\log\log n$ via a lower bound for \textsc{distance-$\log\log n$ st-connectivity}. In fact, the lower bound of \cite{rossman2014formulas} implies a much stronger result: for all $d \le \log\log\log n$,
\begin{equation}\label{eq:sep3}
  \{\text{poly-size depth $d+1$ circuits}\} \nsubseteq \{\text{$n^{o(d)}$-size depth $\tsfrac{\log n}{(\log\log n)^3}$ formulas}\}.
\end{equation}
It remains an open problem to push separation (\ref{eq:sep3}) to greater depths.

\section*{Acknowledgements}
My thanks to Rahul Santhanam, Rocco Servedio and Li-Yang Tan for valuable discussions and to the anonymous referees of FOCS'15 for their helpful feedback. This work was carried out while the author was a research fellow at the Simons Institute.

\end{document}